\documentclass{article}

\usepackage[margin=1.5in]{geometry}

\usepackage{amsthm}
\usepackage{amsfonts}
\usepackage{amsmath}
\usepackage{graphicx}
\usepackage{url}
\urlstyle{rm}
\usepackage{hyperref}

\newtheorem{theorem}{Theorem}
\newtheorem{lemma}[theorem]{Lemma}
\newtheorem{proposition}[theorem]{Proposition}
\newtheorem{corollary}[theorem]{Corollary}
\newcommand\term[1]{{\it #1}}

\newcommand\R{\mathbb R}

\newcommand\bdy\partial

\newcommand\comment[1]{}

{\makeatletter
 \gdef\xxxmark{%
   \expandafter\ifx\csname @mpargs\endcsname\relax 
     \expandafter\ifx\csname @captype\endcsname\relax 
       \marginpar{xxx}
     \else
       xxx 
     \fi
   \else
     xxx 
   \fi}
 \gdef\xxx{\@ifnextchar[\xxx@lab\xxx@nolab}
 \long\gdef\xxx@lab[#1]#2{\textbf{[\xxxmark #2 ---{\sc #1}]}}
 \long\gdef\xxx@nolab#1{\textbf{[\xxxmark #1]}}
 \long\gdef\xxx@lab[#1]#2{}\long\gdef\xxx@nolab#1{}%
}

\begin{document}

\title{Generalized D-Forms Have No Spurious Creases}
\author{Erik~D.~Demaine\footnote{Partially supported by NSF CAREER
    award CCF-0347776, DOE grant DE-FG02-04ER25647, and AFOSR grant
    FA9550-07-1-0538.}$\;$
 and Gregory~N.~Price\footnote{Partially supported by an NSF
    Graduate Fellowship.}\\
  MIT Computer Science and Artificial Intelligence Laboratory \\
  32 Vassar St., Cambridge, MA 02139, USA \\
  \protect\url{{edemaine,price}@mit.edu}
} \date{}
\maketitle

\begin{abstract}
  A convex surface that is flat everywhere but on finitely many
  smooth curves (or \emph{seams}) and points is a \emph{seam form}.
  We show that the only creases through the flat components of a seam
  form are either between vertices or tangent to the seams.  As
  corollaries we resolve open problems about certain special seam
  forms: the flat components of a D-form have no creases at all, and the
  flat component of a pita-form has at most one crease, between the seam's
  endpoints.
\end{abstract}

\newpage

\section{Introduction} \label{sec:introduction}

Given any metric space $S$ with the topology and local geometry
required of the surface of a convex three-dimensional body,
there is exactly one convex body up to isometry whose surface has
the intrinsic geometry of $S$.  This is the endpoint of a line of
research pursued in the middle of the last century by Alexandrov and
Pogorelov \cite{Alexandrov}, and it implies a strong correspondence
between the geometry of a convex body and the intrinsic geometry of
its surface.  On the other hand, the exact nature of this
correspondence is not yet well understood---which properties in the
surface geometry imply what properties in the body geometry, and vice
versa.

For example, if $S$ is a \emph{D-form}, obtained by sewing together
two smooth convex shapes of the same perimeter, then even for this
special case the most basic questions are open.  These forms were
invented by an artist \cite{Wills} and introduced into the literature
in \cite{PW}.  The latter study poses three
problems: (1) when is the D-form the convex hull of a space curve, (2)
when are the two pieces free of creases, and (3) how can one compute the
D-form numerically from the two shapes.  A later treatment
\cite{GFALOP} suggests an informal argument for Problem 1 (arguing
that the D-form is always the convex hull of its seam) and leaves
Problems 2 and 3 open.  The same book \cite{GFALOP} introduces also a related
special case where $S$ is obtained by sewing up a single smooth convex
shape along its boundary in one seam, calling these \emph{pita-forms}
and suggesting, based on paper experiments, that pita-forms might never
have creases.

We resolve Problems 1 and 2: both D-forms and pita-forms are always
the convex hull of their seams, and (excluding the seam) D-forms are
always free of creases but a pita-form may have one crease.  Our
results apply to a natural generalization of both D-forms and
pita-forms, the \emph{seam form}, which
roughly consists of intrinsically flat pieces joined along finitely many
seams.  Because the original sources of the problem are stated
informally, we first introduce precise definitions that we believe
capture the intuitive picture.  Then we show the following theorems:

\begin{theorem}\label{t:convex-hull-general}
  Every three-dimensional convex body is the convex hull of the
  nonflat points on its surface.
\end{theorem}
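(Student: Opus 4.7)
The plan is to reduce the statement to the classical Minkowski theorem (finite-dimensional Krein--Milman), which says that every compact convex set in $\R^n$ is the convex hull of its extreme points. Let $K$ be the convex body and let $N \subseteq \bdy K$ be its set of nonflat points, where ``flat'' means that a neighborhood in $\bdy K$ lies in a single plane. It suffices to exhibit $\mathrm{ext}(K) \subseteq N$, since then
\[
  K \;=\; \mathrm{conv}(\mathrm{ext}(K)) \;\subseteq\; \mathrm{conv}(N) \;\subseteq\; K,
\]
forcing equality throughout.

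The first step is to verify $\mathrm{ext}(K) \subseteq N$. Suppose $p \in \bdy K$ is flat, so that some relatively open neighborhood $U$ of $p$ in $\bdy K$ lies in a plane $\Pi$. Then $U$ is a relatively open planar convex region containing $p$ in its interior, so $p$ is the midpoint of two distinct points of $U \subseteq K$, and hence $p$ is not an extreme point. Contrapositively, every extreme point is nonflat.

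The second step is to invoke Minkowski's theorem to conclude $K = \mathrm{conv}(\mathrm{ext}(K))$; this is standard for compact convex subsets of $\R^3$, and a convex body here is exactly such a set. Combining the two steps yields the theorem. The only subtlety is bookkeeping about the precise notion of ``nonflat'' used in the paper: one has to ensure it is weak enough that the relative interior of any planar patch on $\bdy K$ is excluded, which the step above does. There is no serious obstacle --- the result is essentially Minkowski's theorem applied to the trivial observation that an extreme point cannot sit strictly inside a flat face.
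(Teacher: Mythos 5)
Your high-level strategy is exactly the paper's: apply Minkowski's theorem ($K = \mathrm{conv}(\mathrm{ext}(K))$) and show that no flat point is extreme. But your first step proves this only for the wrong notion of ``flat.'' The paper defines a point to be flat if it has a neighborhood \emph{isometric} to a plane region --- an intrinsic condition --- not a neighborhood \emph{lying in} a plane. These are very different: every point of a cylindrical or conical portion of $\bdy K$, and more generally of any developable patch, is flat in the paper's sense while having no planar neighborhood. Your midpoint argument ($p$ is interior to a planar convex patch, hence a midpoint of two surface points) simply does not apply to such points. This is not bookkeeping; it is the entire content of the theorem. Indeed, the application to Corollary~\ref{c:convex-hull-seam-form} requires the intrinsic notion: the flat components of a D-form are isometric to plane regions but are curved in $\R^3$, so with your extrinsic reading the theorem would say nothing about them.

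The paper closes this gap (Proposition~\ref{p:flat-no-extreme}) with a genuinely different argument: if an extreme point $p$ were flat, a separating hyperplane cuts off a cap $D \subset \bdy K$ around $p$ contained in the flat neighborhood; every plane through $p$ making a small enough angle with the separating hyperplane is a supporting plane touching $K$ inside $D$, so the Gauss image $G(D)$ has positive spherical area; by the Theorema Egregium a surface isometric to a plane region has Gauss image of measure zero --- contradiction. You would need this (or some other argument showing that an intrinsically flat convex surface point is traversed by a line segment of $\bdy K$) to make your step~1 go through.
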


\begin{corollary}\label{c:convex-hull-seam-form}
  Every seam form is the convex hull of its seams and vertices.
\end{corollary}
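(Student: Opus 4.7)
The plan is to apply Theorem~\ref{t:convex-hull-general} essentially as a black box, so the work consists of reconciling the theorem's hypothesis with the definition of a seam form and then tracking which points qualify as ``nonflat.''

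First, I would note that a seam form, being a convex surface homeomorphic to a sphere with the intrinsic geometry required of the boundary of a convex body, bounds a three-dimensional convex body $K$. The seam form is then exactly $\bdy K$, so Theorem~\ref{t:convex-hull-general} tells us that $K$ equals the convex hull of the set $N \subseteq \bdy K$ of nonflat points.

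Second, I would identify where $N$ can lie. By the definition of a seam form, every point of $\bdy K$ away from the seams and vertices has a neighborhood in $\bdy K$ that is intrinsically flat and, being a piece of a convex surface, is therefore extrinsically planar. Such a point has a flat neighborhood in the convex body sense, so it is not in $N$. Hence $N$ is contained in the union $S$ of the seams and vertices. Combined with the reverse inclusion $\mathrm{conv}(S) \subseteq K$ (since $S \subseteq \bdy K$ and $K$ is convex), we get
\[
  K \;=\; \mathrm{conv}(N) \;\subseteq\; \mathrm{conv}(S) \;\subseteq\; K,
\]
so equality holds throughout and the seam form $\bdy K$ bounds $\mathrm{conv}(S)$, which is what is meant by saying the seam form \emph{is} the convex hull of its seams and vertices.

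The only step that requires any care is the second one: certifying that a point with a flat intrinsic neighborhood in $\bdy K$ is genuinely flat in the ambient sense required by Theorem~\ref{t:convex-hull-general}. I expect this to be a short appeal to the fact that an isometric immersion of a flat disk into $\R^3$ whose image lies on a convex surface must be a planar piece, but depending on how ``nonflat'' was defined in the lead-up to Theorem~\ref{t:convex-hull-general}, this may be immediate. Everything else is a direct set-containment argument.
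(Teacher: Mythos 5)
Your proposal is correct and matches the paper, which derives the corollary in exactly this way: the pieces of a seam form are flat by definition, so the nonflat points of Theorem~\ref{t:convex-hull-general} all lie on seams and vertices, and the containment chain you write finishes the argument. One caution: your intermediate claim that an intrinsically flat neighborhood on a convex surface ``is therefore extrinsically planar'' is false --- the flat components of a D-form are themselves intrinsically flat but curved in space (they are developable, not planar). Fortunately you do not need it: the paper defines \emph{flat} intrinsically (a neighborhood isometric to a plane region), so the identification of the nonflat set with a subset of the seams and vertices is immediate, exactly as you anticipated in your final paragraph.
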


\begin{theorem}\label{t:no-creases-seam-form}
  In a flat component of a seam form, every crease lies on a line
  segment composed of creases, and each endpoint of such a segment is
  either a strict vertex or a point of tangency to a seam.
\end{theorem}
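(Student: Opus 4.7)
The plan is to identify creases in a flat component $F$ with edges (one-dimensional faces) of the ambient convex body $K$, use that edges of convex bodies are straight line segments, and then case-analyze the endpoints via the Gauss image and the tangent cone.

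A point $p \in F$ is a crease exactly when $K$'s surface has a one-dimensional set of supporting normals at $p$, equivalently when $p$ lies in the relative interior of an edge $E$ of $K$. Since edges of convex bodies are line segments in $\R^3$, the connected component of $E \cap F$ through $p$ is a straight segment---straight both extrinsically and (since $F$ is intrinsically flat) intrinsically. Now suppose an endpoint $q$ of this segment lies in the open interior of $F$. If $q$ is in the relative interior of the maximal edge $E$, then a neighborhood of $q$ on $\bdy K$ lies in $F$ and $E$ continues past $q$ inside it, contradicting $E \cap F$'s ending at $q$. Otherwise $q$ is an endpoint of $E$, so the line $L \supset E$ leaves $K$ past $q$; letting $v$ be the direction from $q$ into $E$, we have $v \in T_q K$ but $-v \notin T_q K$, so $T_q K$ has no line in direction $v$. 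At interior points of $E$ the tangent cone is a wedge with line direction $v$ (supporting normals being perpendicular to $v$), and continuity forces $T_q K$ either to be such a wedge---which would contain $-v$, contradiction---or a strict cone, meaning $q$ has two-dimensional Gauss image. The latter makes $q$ a vertex of $K$ with positive intrinsic angle defect, contradicting $q \in F$. Hence endpoints lie on $\bdy F$, which is the union of seams and strict vertices.

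If such an endpoint $q$ is a strict vertex we are done; otherwise $q$ is in the smooth interior of a seam $\sigma$ with tangent direction $t$, and I claim $L \parallel t$. Let $\Pi_1, \Pi_2$ be the supporting planes on the two sides along $E \cap F$, which are fixed and satisfy $\Pi_1 \cap \Pi_2 = L$. Near each side $A$ of $q$ along $\sigma$, a seam point $q_A$ is a fold between $F$ (with supporting plane approaching $\Pi_1$, say) and the flat region across $\sigma$ (with supporting plane approaching some $\Pi_3$). The seam tangent at $q_A$ equals the wedge-edge direction of $T_{q_A} K$, which lies in $\Pi_1 \cap \Pi_3$; letting $q_A \to q$ yields $t \in \Pi_1 \cap \Pi_3$, and the analogous argument from side $B$ gives $t \in \Pi_2 \cap \Pi_3$. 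Thus $t \in \Pi_1 \cap \Pi_2 = L$, i.e., the crease is tangent to $\sigma$ at $q$.

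The main obstacle I foresee is the limit claim in the last paragraph: that the supporting plane on the $F$ side at $q_A$ converges specifically to $\Pi_1$ as $q_A \to q$ along the seam. Establishing this requires a careful local analysis of $F$'s possibly non-planar developable structure near $q$, exploiting continuity of the Gauss image together with the fact that the supporting plane is constant and equal to $\Pi_1$ in the relative interior of $E \cap F$ on that side. The interior-endpoint argument is comparatively clean, hinging on the convex-geometry fact that endpoints of maximal edges of a convex body have two-dimensional Gauss image.
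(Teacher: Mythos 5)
Your high-level plan (straight crease segments are edges of the body; endpoints cannot be interior to the flat component; seam endpoints force tangency) matches the shape of the paper's argument, but the very first step hides the central difficulty and is not justified. You assert that a point $p$ with a one-dimensional set of supporting normals ``equivalently'' lies in the relative interior of an edge of $K$. That equivalence is false for general convex bodies: a point on the rim of a capped cylinder has a one-dimensional normal arc but is an extreme (indeed exposed) point lying on no edge. What saves the situation for a crease point $p$ of a \emph{flat} component is precisely that $p$ cannot be an extreme point, and proving this is the real work: the paper's Proposition~\ref{p:flat-no-extreme} shows that an extreme point can be cut off by a hyperplane and therefore carries a positive-area Gauss image, contradicting flatness via the Theorema Egregium. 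Only after one knows $p$ is traversed by a line (whose intersection with the body then lies in every supporting plane of $p$, giving the segment of creases) does the ``relative interior of an edge'' description become correct. You do invoke the Gauss-area/flatness connection later, for endpoints of maximal edges, but you omit it exactly where it is indispensable; as written, the straightness claim assumes what it needs to prove.

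The second gap is the one you flag yourself: the tangency argument at a seam endpoint $q$ rests on an unproved limit (that the supporting plane on the $F$ side converges to $\Pi_1$ along the seam) and on the assumption that nearby seam points are genuine folds with wedge-edge direction equal to the seam tangent, which need not hold if the surface is $C^1$ across the seam there. The paper sidesteps both issues by arguing entirely in the Gauss image: since $q$ is flat-adjacent and not a strict vertex, $G(q)$ is a zero-area convex spherical set, hence a single great-circular arc containing the two normals $m,n$ contributed by the crease; some cell around $q$ must contribute a positive-length spherical curve joining $m$ to $n$ inside that arc; a flat piece cannot contribute a great-circle arc (its rulings would be parallel and could not converge at $q$), so a semicrease must, and confining its Gauss image to the great circle perpendicular to $[p,q]$ forces tangency. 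You would also need to handle the case where the endpoint is a non-strict vertex of the cell complex incident to several seams, which your trichotomy (interior of $F$, strict vertex, smooth interior of a seam) does not cover; the paper's cell-by-cell walk around $q$ from $S_1$ to $S_2$ handles it uniformly.
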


\begin{corollary}\label{c:no-creases-d-form}
  The flat components of a D-form are without creases; in the flat
  component of a pita-form, the only crease(s) make up the line
  segment between the endpoints of the seam.
\end{corollary}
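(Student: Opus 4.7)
The plan is to read the corollary directly off Theorem~\ref{t:no-creases-seam-form}, which confines every crease in a flat component to a line segment of creases whose endpoints are each either a strict vertex (a point with positive angle deficit) or a point of tangency to a seam. So the task reduces to classifying strict vertices and possible tangencies in each of the two special cases.

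For a D-form, the two flat components are smooth convex shapes identified along boundaries of equal length. At every seam point each side contributes interior angle $\pi$, the two sides sum to $2\pi$, and there is no angle deficit; hence there are no strict vertices. For the tangency option, the seam bounds the flat component as a smooth strictly convex curve in the intrinsic (planar) geometry, and a chord tangent to such a curve at an endpoint would have to leave the region immediately, so no tangent endpoint exists either. Theorem~\ref{t:no-creases-seam-form} then forbids any crease.

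For a pita-form, the flat component is a single smooth convex shape with its boundary identified with itself along one seam. In the interior of the seam each pair of identified boundary points again yields total angle $2\pi$ and hence no strict vertex, but the two endpoints $A$ and $B$ of the seam are unpaired ``pinch'' points, so each locally retains only the interior angle $\pi$ of the original shape and carries positive angle deficit. The strict vertices are therefore exactly $A$ and $B$, and the tangency argument from the D-form case still rules out any tangent endpoint. Theorem~\ref{t:no-creases-seam-form} then restricts any crease to a segment with both endpoints in $\{A,B\}$, i.e.\ to the chord $AB$.

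The main subtlety I anticipate is the tangency step: I am implicitly using that the smooth convex seam is strictly convex. A boundary arc that were intrinsically straight could in principle support a tangent chord, and excluding this will probably require either an implicit strictness assumption in the paper's notion of ``smooth convex shape'' or the observation that any such straight boundary arc is a side of the body's convex hull and therefore cannot arise as the endpoint of a segment lying in the flat component's interior.
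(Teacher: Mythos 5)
Your argument is correct and follows essentially the same route as the paper: apply Theorem~\ref{t:no-creases-seam-form}, rule out tangent endpoints using convexity of the flat components (this is the paper's Corollary~\ref{c:no-creases-doubly-convex}), and locate the strict vertices by angle deficit---none for a D-form, and exactly the two seam endpoints for a pita-form. The strict-convexity subtlety you flag at the end is not actually needed: at a smooth boundary point of any convex plane region the tangent line is a supporting line, so a segment leaving that point in the tangent direction stays in the supporting line and never enters the open region, which is all the tangency exclusion requires.
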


Intuitively the line segment between a pita-form's endpoints in
Corollary~\ref{c:no-creases-d-form} should be thought of as one
``crease''; it is a consequence of our definitions, below, that this
segment may be arbitrarily subdivided into several segments we call
creases.

Problem 3, to efficiently compute the three-dimensional shape of a
D-form or seam form from its two-dimensional intrinsic geometry, has
now been largely resolved.  To make this problem well posed, one needs
a finite representation of the input geometry, which is most naturally
done by a piecewise-linear or polyhedral approximation.  
With considerable effort, the
problem of reconstructing a three-dimensional convex polyhedron from
its intrinsic geometry can be
reduced to the solution of a high-dimensional ordinary differential
equation~\cite{BI}.  The numerical solution of this equation appears
to be achievable efficiently in practice, and is provably achievable
within pseudopolynomial time~\cite{KPD}.

We introduce terminology in Section~2, prove
Theorem~\ref{t:convex-hull-general} and its
Corollary~\ref{c:convex-hull-seam-form} in Section~3, and prove
Theorem~\ref{t:no-creases-seam-form} and its
Corollary~\ref{c:no-creases-d-form} in Section~4.  In Section~5 we
describe counterexamples that show the necessity of some of the
hypotheses in our results.

\section{Background and Notation} \label{sec:background}

For us a \term{surface} is a metric 2-manifold embedded in $\R^3$.  The
surface is $C^k$ if the manifold and its embedding are $C^k$.  The
surface is \term{piecewise-$C^k$} if it can be decomposed as a complex
of vertices, $C^k$ open edges, and $C^k$ open regions.

A \term{good surface} is a piecewise-$C^2$ surface.  A good surface
$S$ therefore decomposes into a union of $C^2$ surfaces $S_i$, called
\term{pieces}, $C^2$ edges $\gamma_j$, which we call
\term{semicreases}, and vertices.  If $S$ is itself $C^1$ everywhere on a
semicrease, we call it a \term{proper semicrease}; otherwise it is a
\term{crease}.  (This conservative definition of crease,
where some parts may be $C^1$ but not~$C^2$,
only broadens our characterization of creases in seam forms.)

A point on a surface is \term{flat} if it has a neighborhood isometric
to a region in the plane.  A surface or part of a surface is called
flat if all of its points are flat.

A surface $S$ is \term{convex} if $S \subseteq \bdy X$ for some bounded
convex body $X$ in $\R^3$.
A \term{normal} to a convex
body $X$ at a point $x$ is a unit vector $n$ with $n \cdot x =
\sup_{x' \in X} n\cdot x'$.
The relation between points on the boundary of $X$ and their normals
is traditionally called the \term{Gauss map}, though it need not be a
map---one point may have many normals.  We write $G(x)$ for the
normals at $x$, and $G(U)$ for all the normals to any point in $U
\subset X$.  Observe that $G(x)$ is always a convex subset of the sphere.

A consequence of Gauss' celebrated Theorema Egregium~\cite{Gauss}
is that a convex surface $U$ is flat just if $G(U)$ has zero area.
If $G(x)$ has positive spherical area, then we call $x$ a \term{strict
  vertex}.  The $C^2$ condition prevents a strict vertex $x$ from being
on a semicrease or a piece, so for good surfaces, strict vertices
are indeed vertices.

\begin{figure}[tb]
  \centering
  \includegraphics[scale=.4]{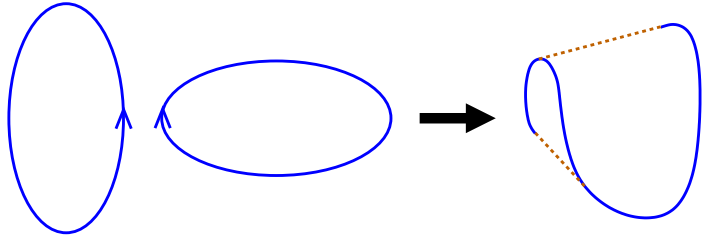}
  \caption{A D-form, constructed by sewing together two ellipses (at
    left).  The solid line is the seam, and the dotted lines are
    the false edges from projection into the page.}
  \label{f:d-form}
\end{figure}

A \term{seam form} $S$ is a good convex surface in which each piece
$S_i$ is flat.  It is simple to verify that such a surface decomposes
uniquely into maximal connected flat open subcomplexes, which we call
\term{flat components}, and some leftover semicreases and vertices,
the connected components of which we call \term{seams}.

A \term{simple seam form} is a seam form in which each flat component
is isometric to a convex plane region, and a \term{simple smooth seam
  form} is a simple seam form in which these regions have smooth
($C^\infty$) boundary.
A simple smooth seam form with one flat component is called a \term{pita
  form}, and with two flat components is called a \term{D-form}.
See Figures \ref{f:d-form} and \ref{f:pita-form}.

Given a convex body $X$ and a point $x
\in X$, we say that a line $\ell$ \term{traverses} $x$ if $x \in \ell$
and some open neighborhood of $x$ on $\ell$ is contained in~$X$.  An
\term{extreme point} of $X$ is a point $x \in X$ not traversed by any
line.

\section{Convex Hull}
\label{sec:convex-hull}

In this section we prove Theorem~\ref{t:convex-hull-general}, that
every three-dimensional convex body is the convex hull of the nonflat
points on its surface.

From convex geometry we have the following characterization of the
minimal set from which a convex body can be recovered as the convex
hull: we need only the extreme points.
\begin{theorem}[Minkowski's Theorem]\label{t:hull-of-extreme}
  Every convex body in $\R^n$ is the convex hull of its extreme points.
\end{theorem}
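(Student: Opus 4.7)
The plan is to prove Theorem~\ref{t:hull-of-extreme} by induction on the ambient dimension $n$, reducing an arbitrary point of the body either to a lower-dimensional face (via a supporting hyperplane) or to two boundary points (via an arbitrary line through the point).

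For the base case $n = 1$, a compact convex set is a closed interval $[a,b]$, whose extreme points are exactly $a$ and $b$, so the claim is immediate. For the inductive step, assume the result holds in $\R^{n-1}$, let $X \subset \R^n$ be a convex body, and let $x \in X$; I will show $x$ is a convex combination of extreme points of $X$.

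If $x$ lies in the interior of $X$, choose any line $\ell$ through $x$. Compactness and convexity of $X$ make $\ell \cap X$ a closed segment whose endpoints $y,z$ lie on $\bdy X$, and $x$ is a convex combination of $y$ and $z$; so it suffices to handle boundary points. If $x \in \bdy X$, invoke the supporting hyperplane theorem to obtain a hyperplane $H$ through $x$ with $X$ contained in one closed halfspace bounded by $H$. Then $F := X \cap H$ is a compact convex set lying in $H \cong \R^{n-1}$ and of strictly smaller affine dimension, so by the inductive hypothesis $x$ is a convex combination of extreme points of $F$. To finish, I would prove the face-extremality lemma: every extreme point $p$ of $F$ is also extreme in $X$. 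If not, $p$ would be the midpoint of some nondegenerate segment $[u,v] \subset X$; but since $H$ is a supporting hyperplane and $p \in H$, the whole segment $[u,v]$ must lie in $H$, hence in $F$, contradicting the extremality of $p$ in $F$.

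The main obstacle is the boundary case, and within it the face-extremality step together with the application of the supporting hyperplane theorem at an arbitrary boundary point. Both ingredients are standard in convex analysis---the supporting hyperplane theorem follows from separating a boundary point from the interior via Hahn--Banach or a direct finite-dimensional separation argument---so the proof amounts to organizing these classical tools into the induction above rather than inventing new machinery.
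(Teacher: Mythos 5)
Your proof is correct and follows the same route the paper points to: the paper does not spell out an argument but cites Schneider's Corollary 1.4.5 and describes it as ``a straightforward induction on the dimension of the body,'' which is exactly your induction via supporting hyperplanes and the face-extremality lemma. The only cosmetic points are that the inductive hypothesis should be phrased for compact convex sets of smaller affine dimension (since $F = X \cap H$ may have empty interior in $H$), and a non-extreme point is a relative interior point of a segment rather than necessarily its midpoint; neither affects the argument.
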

The proof is a straightforward induction on the dimension of the body
and can be found as Corollary 1.4.5 in Schneider's textbook
\cite{Schneider}.

It remains to describe the extreme points of a seam form.  To do so we
begin with the following proposition:
\begin{proposition}\label{p:extreme-separate}
  If $p$ is an extreme point of the convex body $X$, then for every
  open neighborhood $U$ of $p$ in $X$, some hyperplane has $p$
  strictly on one side and all of $X \setminus U$ strictly on the other.
\end{proposition}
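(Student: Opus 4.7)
The plan is to reduce the statement to a classical strict separation theorem between a point and a compact convex set disjoint from it. The main work is to identify the right convex set to separate $p$ from, namely the convex hull $C = \mathrm{conv}(X \setminus U)$, and to show that $p \notin C$.

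First I would recall the standard characterization of extreme points: $p$ is extreme in $X$ exactly when $p$ cannot be written as a convex combination $\sum_i \lambda_i q_i$ with all $\lambda_i > 0$, $\sum_i \lambda_i = 1$, and each $q_i \in X \setminus \{p\}$. (If $p = \lambda_1 q_1 + (1-\lambda_1)r$ where $r$ is the convex combination of the remaining $q_i$, then $p$ lies strictly between $q_1$ and $r$ in $X$, contradicting extremality; the two-point case is immediate.) Since $X \setminus U \subseteq X \setminus \{p\}$, this immediately yields $p \notin C$.

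Next I would observe that $C$ is compact. Since $X$ is a bounded (hence compact, in the standard convex-geometry sense) convex body and $U$ is open in $X$, the set $X \setminus U$ is closed in $X$ and therefore compact. By Carath\'eodory's theorem in $\R^3$, every point of $C$ is a convex combination of at most four points of $X \setminus U$, so $C$ is the image of the compact set $(X \setminus U)^4 \times \Delta^3$ under the continuous convex-combination map, and hence compact.

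Finally, since $\{p\}$ and $C$ are disjoint compact convex sets in $\R^3$, the strict separating hyperplane theorem provides a hyperplane $H$ with $p$ strictly on one side and $C$, and in particular $X \setminus U$, strictly on the other. The only substantive step is the extremality argument of the first paragraph; everything else is a standard invocation of compactness and separation.
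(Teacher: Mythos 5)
Your proposal is correct and follows essentially the same route as the paper: use extremality to show $p$ lies outside the convex hull of $X \setminus U$, then apply strict separation of a point from a convex set. The only difference is that you explicitly verify compactness of the hull via Carath\'eodory before invoking strict separation, a point the paper passes over by simply calling the hull ``a convex body''; this is a reasonable extra precaution but not a change of method.
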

\begin{proof}
  This is Lemma 1.4.6 in \cite{Schneider}.  For completeness we
  give the proof.

  Because $p$ is an extreme point, it cannot by definition be the
  convex combination of any two other points in $X$.  Therefore it is
  not the convex combination $a_1x_1 + \dotsb + a_kx_k$, with all $a_i
  > 0$, of any $k$ other points in $X$, because otherwise we would
  have $p = (1-a_k)(\frac{a_1}{1-a_k}x_1 + \dotsb +
  \frac{a_{k-1}}{1-a_k}x_{k-1}) + a_kx_k$, and certainly the convex
  combination $\frac{a_1}{1-a_k}x_1 + \dotsb +
  \frac{a_{k-1}}{1-a_k}x_{k-1}$ is a point in $X$.  In other words,
  $p$ lies outside the convex hull of $X \setminus \{p\}$, and
  consequently outside the convex hull $Y$ of $X \setminus U$.

  Now $Y$ is itself a convex body, and $p$ a point outside it.  By the
  Separating Hyperplane Theorem, some hyperplane strictly separates
  them, and because $X \setminus U \subset Y$, it strictly separates $p$
  and $X \setminus U$ as required.
\end{proof}

\begin{proposition}\label{p:flat-no-extreme}
  On the surface of a convex body, there are no flat extreme points.
\end{proposition}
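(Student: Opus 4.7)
The strategy is to derive a contradiction by producing a line segment in $X$ with $p$ in its relative interior, making $p$ non-extreme by definition. The starting point is to use the flatness of $p$ to fix an open neighborhood $V \subset \partial X$ of $p$ together with an isometry $\varphi : D \to V$ from an open Euclidean disk $D \subset \R^2$ with $\varphi(0) = p$. The Theorema Egregium statement recalled earlier in the paper then implies that the Gauss image $G(V)$ has zero spherical area, so $\partial X$ has vanishing Gaussian curvature throughout $V$.

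The heart of the argument is to show that through $p$ passes a line segment of positive length contained in $V$. This is the classical statement that a convex surface with an intrinsically flat neighborhood is extrinsically ruled by straight segments of $\R^3$. For a $C^2$ convex surface the argument is immediate: convexity forces the two principal curvatures to have the same sign, so vanishing Gaussian curvature on $V$ makes one of them vanish identically, and the corresponding kernel direction of the second fundamental form integrates to a straight segment of $\R^3$ lying in the surface. For a possibly non-smooth convex body the same conclusion still holds, via either approximation by smooth convex bodies or Alexandrov's intrinsic-geometric theory of convex surfaces. Granted the ruling $[a,b] \subset V$ through $p$, openness of $V$ lets us choose $[a,b]$ with $p$ in its relative interior, so the line through $a$ and $b$ traverses $p$ inside $X$, contradicting extremality.

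The main obstacle is precisely this middle step: passing from intrinsic flatness of $V$ to the existence of an extrinsic straight ruling through $p$ on a possibly non-smooth convex surface. A self-contained alternative avoiding the general developable-surface theorem is to work directly with the pair of points $a_r = \varphi(-re)$ and $b_r = \varphi(re)$ for a fixed direction $e \in \R^2$ and $r \to 0$: extrinsic $\le$ intrinsic distance gives $|a_r - b_r| \le 2r$ immediately, while the zero-area condition on $G(V)$ along the intrinsic geodesic joining them could be exploited to force equality, at which point $a_r, p, b_r$ are collinear with $p$ at the midpoint and extremality fails. Either route funnels through the same geometric content, namely that the degeneracy of $G$ over a two-dimensional flat patch of $\partial X$ must manifest extrinsically as at least a one-dimensional degeneracy along a line segment through every point of the patch.
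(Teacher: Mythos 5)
Your overall plan---find a straight segment of $\partial X$ through $p$ and conclude that $p$ is traversed by a line---reaches the right conclusion, but the step that carries all the weight is exactly the one you leave open: that intrinsic flatness of a neighborhood of $p$ forces an \emph{extrinsic} straight ruling through $p$ when $\partial X$ is merely convex, not $C^2$. Citing ``approximation by smooth convex bodies or Alexandrov's theory'' does not close this. Smoothing a convex body does not preserve intrinsic flatness of a patch, so there is no obvious sequence of $C^2$ rulings to pass to the limit; and the Pogorelov--Alexandrov theorem that intrinsically flat convex surfaces are developable is substantially deeper than the proposition at hand---its standard proofs begin by establishing precisely this proposition (a flat point is not extreme) and only then upgrade the traversing segment to a ruling. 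The paper itself follows that logical order: Proposition~\ref{p:flat-no-extreme} is proved first, and the line traversing a flat point is \emph{deduced} from it at the start of the proof of Proposition~\ref{p:flat-crease-segment}. So as written your argument either invokes a stronger result or risks circularity.

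Your proposed self-contained alternative also fails as stated: for a \emph{fixed} direction $e$ there is no reason that $|a_r - b_r| = 2r$. On a cylinder, which is flat everywhere, every non-axial direction gives a strictly shorter extrinsic chord, so the zero-area condition on $G(V)$ cannot ``force equality'' for an arbitrary $e$; the ruling direction must be produced, not assumed, and your sketch gives no mechanism for finding it. The paper's proof sidesteps rulings entirely: if $p$ were extreme, Proposition~\ref{p:extreme-separate} cuts off an arbitrarily small cap $D$ of $\partial X$ around $p$ by a hyperplane $H$, and every plane through $p$ making a sufficiently small angle with $H$ is then a support plane of $X$ touching it inside $D$; the normals to these planes fill a spherical cap of positive area, so $G(D)$ has positive area and $D$ is not flat, by the Theorema Egregium criterion quoted in Section~2. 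That argument is short, uses only the separation lemma, and never needs the developable structure of flat convex surfaces.
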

\begin{proof}
  Suppose some extreme point $p$ of a convex body $X$ was flat,
  with a neighborhood $S \subset \partial X$ isometric to a plane region.
  Let $U$ be an open neighborhood of $p$ in $X$ with $U \cap \partial
  X \subset S$.  Let the hyperplane $H$ guaranteed by
  Proposition~\ref{p:extreme-separate} separate $X$ into convex bodies
  $C$ and $Y$ with $p \in C$, and let $D = C \cap \partial X$.  Because $C
  \subset U$, we have $D \subset S$ so that $D$ is flat.

  Now consider the normals to $X$ along the portion $D$ of its
  surface.  Let $d$ be the distance from $p$ to $H$, and let $r$ be
  the maximum distance from the projection of $p$ onto $H$ to any
  point in $H \cap X$.  Then any plane through $p$ and making an angle
  at most $\theta = \tan^{-1}(d/r)$ to $H$ fails to intersect $H \cap X$ and
  therefore fails to intersect $Y$.  Therefore the normals to these
  planes, covering a spherical area of $2\pi(1 - \cos \theta) > 0$,
  all are normals to $X$ somewhere on~$D$.  This gives $G(D)$
  a positive area, contradicting that $D \subset S$ is flat.
\end{proof}

Theorem~\ref{t:convex-hull-general} is now immediate from
Theorem~\ref{t:hull-of-extreme} and Proposition~\ref{p:flat-no-extreme},
and Corollary~\ref{c:convex-hull-seam-form} follows.

\section{Creases}
\label{sec:creases}

In this section we prove Theorem \ref{t:no-creases-seam-form},
characterizing the possible creases of a seam form.

\comment{
\begin{lemma}\label{l:gauss-map-closed}
  For a convex body $X$, the Gauss ``map'' $G$, regarded as a relation
  in $X \times S^2$, is a closed set.
\end{lemma}
\begin{proof}
  We need to show that if points $x_1, x_2, \dotsc \in X$ converge to
  $x_* \in X$, and normals $n_1, n_2, \dotsc \in S^2$ to $x_1, x_2,
  \dotsc$ respectively converge to $n_* \in S^2$, then $n_*$ is a
  normal to $x_*$.  We have $n_i \cdot x_i = \sup_{x' \in X} n_i \cdot
  x'$ for each $i$.  By continuity, $n_i \cdot x_i$ converges to $n_*
  \cdot x_*$.  Because $X$ is bounded, the functions $x' \mapsto n_i
  \cdot x'$ converge uniformly to $x' \mapsto n_* \cdot x'$, and the
  right-hand sides $\sum_{x' \in X} n_i \cdot x'$ converge to
  $\sup_{x' \in X} n_* \cdot x'$.  Consequently $n_* \cdot x_* =
  \sup_{x' \in X} n_* \cdot x'$ and $n_* \in G(x_*)$ as required.
\end{proof}
}

\begin{proposition}\label{p:flat-crease-segment}
  Let $\gamma$ be a crease in a flat component of a seam form $S$.
  Then $\gamma$ lies on a line segment $[p,q]$ between endpoints $p$ and $q$
  that lie on seams, and the whole segment is composed of creases and vertices.
\end{proposition}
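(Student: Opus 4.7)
My plan is to associate to $\gamma$ a 3D line $L$, then recover the desired segment as $[p,q] = L \cap X$, where $X$ is the convex body bounded by $S$.

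Fix $p_0 \in \gamma$ and let $H_i, H_j$ be the two distinct supporting hyperplanes of $X$ at $p_0$ extending the tangent planes of the adjacent pieces $S_i, S_j$; set $L = H_i \cap H_j$. The tangent direction of $\gamma$ at $p_0$ lies in $H_i \cap H_j = L$. I next show $\gamma \subset L$ globally: because $p_0$ is in a flat component $F$, $G(p_0)$ is a great-circle arc, and its axis is along $L$ (since it contains $n_i, n_j$, both perpendicular to $L$). Since $G(F)$ has zero spherical area and contains the arc $G(p)$ for every $p \in \gamma$, all such arcs lie on the same great circle with axis $L$, forcing the tangent planes of $S_i, S_j$ to remain $H_i, H_j$ along all of $\gamma$; thus $\gamma \subset H_i \cap H_j = L$. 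Setting $E = L \cap X$ yields a line segment $[p,q] \supseteq \gamma$, and for any $x \in E$, both $H_i$ and $H_j$ are supporting at $x$, so $G(x)$ contains the nontrivial arc from $n_i$ to $n_j$; hence $x$ is not a smooth point and must lie on a crease or at a vertex, giving the ``composed of creases and vertices'' part of the claim.

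The most delicate step is to show the endpoints $p, q$ lie on seams. I would argue by contradiction: suppose $p$ lies in the interior of a flat component. Then $G(p)$ has zero area and is a great-circle arc containing $n_i, n_j$, so its great circle is perpendicular to $L$ and every supporting hyperplane at $p$ contains the line $L$. Now work in the intrinsic flat picture at $p$: any semicrease emanating from $p$ lies in the intersection of two adjacent tangent planes, each of which contains $L$, so every such semicrease must lie on $L$ in $\R^3$. A careful counting of the intrinsic angles at $p$ (which must total $2\pi$ by flatness) rules out the possibility that the extension of $\gamma$ past $p$ enters a single smooth piece, and instead forces a semicrease continuing past $p$ along $L$. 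But then $L \cap X$ extends past $p$, contradicting $p$ being an endpoint of $E$.

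The main obstacle is this endpoint analysis, which rests on a subtle combinatorial interaction of three ingredients: the convex structure (through the supporting hyperplanes and the tangent cone at $p$), the vanishing-curvature constraint in the flat component (through the great-circle form of $G(p)$), and the piecewise-$C^2$ decomposition (through the angular structure of the semicreases meeting at $p$). The global containment $\gamma \subset L$ is also nontrivial because the adjacent pieces may be developable without being planar, and it uses that $G(F)$ lies on a single great circle whenever $F$ contains a crease.
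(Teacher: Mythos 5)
Your overall architecture (produce a line $L$ from the two tangent planes at a non-smooth point of $\gamma$, take $E=L\cap X$, observe that $H_i$ and $H_j$ support $X$ at every point of $E$ so every such point has two normals and hence cannot lie on a $C^1$ part of the surface) matches the paper's, and that middle step is correct. But the two steps you yourself flag as delicate contain genuine gaps, and both are symptoms of the same missing ingredient: you never use the fact that a \emph{flat} point of a convex surface cannot be an \emph{extreme} point of the body (Proposition~\ref{p:flat-no-extreme}), which is the engine of the paper's proof. First, your argument that $\gamma\subseteq L$ (equivalently, that $E$ is a nondegenerate segment containing $\gamma$) does not go through as stated. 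That $G(F)$ has zero area does not force the arcs $G(p)$, $p\in\gamma$, onto a single great circle: a union of arcs on continuously varying great circles having zero total area is not obviously impossible and requires a real argument (pairwise distinct great circles meet in only two points, so you would need a Fubini-type positive-area estimate for the two-parameter family). Worse, even granting that all the arcs lie on the great circle with pole along $L$, that only says every normal along $\gamma$ is perpendicular to the direction of $L$; it does not make the normals $n_i,n_j$ \emph{constant} along $\gamma$, so it does not yield that the supporting planes remain the fixed planes $H_i,H_j$, which is what you need for $\gamma\subseteq H_i\cap H_j$. The paper sidesteps all of this: a non-$C^1$ point $x\in\gamma$ is flat, hence not extreme, hence traversed by some line $\ell$; since $\ell$ must be perpendicular to both normals at $x$ its direction is forced, $[p,q]=\ell\cap X$ is automatically nondegenerate, and $\gamma\subseteq[p,q]$ then follows because the segment is composed of creases and only one crease passes through $x$.

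Second, the endpoint analysis is not salvageable in the form you sketch. The claim that a semicrease emanating from $p$ ``lies in the intersection of two adjacent tangent planes'' is false in general (semicreases need not be straight, and the adjacent pieces need not be planar), and the ``careful counting of intrinsic angles'' forcing a semicrease to continue along $L$ past $p$ is precisely the hard combinatorial statement you would need to prove, not a step you can defer. The correct mechanism is again extremality: every point of $(p,q)$ has at least two distinct normals, so the only line that could traverse any point of $[p,q]$ is $L$ itself; $L$ does not traverse the endpoint $p$ (no neighborhood of $p$ on $L$ lies in $X$), so $p$ is an extreme point, hence by Proposition~\ref{p:flat-no-extreme} not flat, hence on a seam or a vertex. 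One further small repair: you must choose $p_0$ to be a point of $\gamma$ at which $S$ fails to be $C^1$ (a crease is only guaranteed to have \emph{some} such point), since otherwise $H_i=H_j$ and $L$ is not defined.
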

\begin{proof}
  Let $S_1$ and $S_2$ be the open pieces bordered by $\gamma$
  in the decomposition of the good surface~$S$,
  and let $x \in \gamma$ be a point at which $S$ is not~$C^1$.
  Then $S_1$ and $S_2$ are $C^2$ surfaces,
  so they have normals $n_1$ and $n_2$ at $x$,
  and because $S$ is not $C^1$ at $x$, these normals are distinct.
  Therefore $G(x)$ contains at least two distinct vectors.

  By Proposition~\ref{p:flat-no-extreme}, $x$ must be traversed by
  some line $\ell$, so that $\ell \cap S = [p, q]$ for some $p$ and $q$.
  Necessarily $\ell$ is perpendicular to all of $G(x)$,
  so for each $y \in [p, q]$ and each normal $n \in G(x)$,
  $n \cdot y = n \cdot x = \sup_{x' \in X} n \cdot x'$ and $n$ is a
  normal of $y$.  Therefore each $G(y)$ contains $G(x)$, and
  so like $G(x)$ has at least
  two distinct vectors.

  The multiple normals in $G(x)$ and hence in each $G(y)$ determine a unique perpendicular line, so
  that no other line may traverse any point of $[p, q]$.  In
  particular no line traverses $p$ or $q$, so by
  Proposition~\ref{p:flat-no-extreme}, these points are not flat and
  must lie on seams or vertices.

  At the same time, because a $C^1$ surface has only one normal at
  each point, no point of $[p, q]$ can be on a $C^2$ piece or a
  semicrease.  The whole segment is therefore made up of creases and
  (nonstrict) vertices.  Because a crease is defined from a
  cell-complex decomposition, only one crease runs through a given
  point, so because $\gamma$ runs through $x$ it must be one of the
  creases making up $[p, q]$.
\end{proof}

In order to analyze the Gauss map at seam and vertex points, we
introduce some additional notation.  Let $x \in S$ be incident to a
1- or 2-cell $C$, a (semi)crease or piece.  Then we define
$$ G_C(x) = \varliminf_{U \ni x} \overline{G(C \cap U)} $$
as the \term{Gauss map at $x$ on $C$}.  For comparison,
observe that $G(x) = \varliminf_{U \ni x} \overline{G(U)}$ because the relation
$G$ is closed, and in particular $G_C(x) \subseteq G(x)$.

\begin{proof}[Proof of Theorem \ref{t:no-creases-seam-form}]
  Let $\gamma$ be a crease in a flat component of a seam form $S$.  By
  Proposition~\ref{p:flat-crease-segment}, $\gamma$ lies on a segment
  $[p, q]$ composed of creases and vertices and whose endpoints lie on seams.  It remains
  to prove that if an endpoint, say $p$, lies on a seam and is not a
  strict vertex, then the seam is tangent to $[p, q]$.

  Let $G_\gamma(p)$ be the great circular arc $mn$, and let the pieces of $S$
  bordering $\gamma$ be $S_1$ and $S_2$.  By continuity,
  $G_{S_1}(p)
  \ni m$ and $G_{S_2}(p) \ni n$ (possibly after exchanging the names
  $m, n$), and because $m \neq n$ the Gauss map at $p$ on at least one
  of the cells $C$ surrounding $p$ from $S_1$ to $S_2$ apart from
  $\gamma$ must be a positive-length spherical curve in order to
  complete the path from $m$ to $n$.
  If $p$ is not a strict vertex, then $G(p)$ is a convex spherical
  shape of zero area, so it is a great circular arc,
  and $G_C(p) \subseteq G(p)$ is also a great circular arc.
  If $C$ is a piece, then $G_C(p)$ is either a singleton or a
  curve not lying on a great circle, because a great-circle Gauss map
  makes parallel rule lines that cannot converge at $p$.
  Therefore $C$ is a semicrease.  Because $G(C)$ must be more than a single
  point, $C$ is a crease, and to make the Gauss map lie within
  the arc $G(p)$, $C$ must be tangent to $[p,q]$ as required.  Finally,
  because $p$ is the endpoint of the intersection of the line $pq$
  with $S$, the crease $C$ must not be a line segment, so by
  Proposition~\ref{p:flat-crease-segment},
  it is actually part of the seam and the proof is complete.
\end{proof}

\comment{
\begin{lemma}\label{l:no-pc-one-crease}
  If a point $p$ on the surface of a convex body is not a vertex, then either
  \begin{enumerate}
  \item no creases pass through $p$, nor approach with nonzero
    limiting dihedral angle at $p$; or
  \item all creases with nonzero limiting dihedral angle at $p$ are
    tangent to a common line
  \end{enumerate}
\end{lemma}
\begin{proof}
  The image $G(p)$ of the Gauss map must be a convex set of zero area,
  because $p$ is not a vertex.  The only such figures on the sphere are
  the singleton points and the great circular arcs.

  If $G(p)$ is a singleton point, then $p$ has only a single
  normal and tangent plane; so no crease can have nonzero dihedral
  angle at $p$, as it would bring two distinct normals and a
  great circular arc between them.

  If $G(p)$ is a great circular arc, then it corresponds to a dihedral
  angle about the line perpendicular to this arc; any nonzero crease
  at $p$ not tangent to this line would introduce an arc of normals
  outside of $G(p)$, so no such crease can be present.
\end{proof}
}

Of course, in a convex plane region, no line segment in the interior is
tangent to the boundary, from which follows a corollary about simple
seam forms.

\begin{corollary}\label{c:no-creases-doubly-convex}
  In a simple seam form, every crease in a flat component
  is on a line segment between two strict vertices.
\end{corollary}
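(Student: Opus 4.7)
The plan is to invoke Theorem~\ref{t:no-creases-seam-form} and rule out its ``tangent to a seam'' alternative under the hypothesis of a simple seam form. That theorem already puts the crease on a segment $[p,q]$, with each endpoint either a strict vertex or a point where some seam is tangent to $[p,q]$. The corollary then reduces to showing the tangency alternative is incompatible with the flat component being isometric to a convex plane region.

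I would proceed by contradiction. Suppose $p$ lies on a seam tangent to $[p,q]$ at $p$, and let $F$ be the flat component containing the open interior of $[p,q]$. By hypothesis there is an isometry $\phi$ from $\bar{F}$ (with its intrinsic metric) onto the closure of a convex plane region $K$, carrying the bounding seams of $\bar{F}$ into $\partial K$. Since $[p,q]$ is a straight segment on a flat surface, it is an intrinsic geodesic of $\bar{F}$; hence $\phi$ sends it to a straight line segment in $\bar{K}$, with both endpoints on $\partial K$ and open interior in $K \setminus \partial K$. Isometries preserve tangency of $C^2$ curves, so the line $\ell$ containing this image segment is tangent to $\partial K$ at $\phi(p)$.

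The key geometric fact is that a tangent line to a convex plane curve at a boundary point is also a supporting line: $\bar{K}$ lies on one closed side of $\ell$, and consequently $\ell \cap \bar{K} \subseteq \partial K$. But $\phi([p,q])$ lies both on $\ell$ and in $\bar{K}$, so $\phi([p,q]) \subseteq \partial K$---contradicting that its open interior lies in $K \setminus \partial K$. Hence no such $p$ exists, and both endpoints of $[p,q]$ must be strict vertices, which is what the corollary asserts.

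The main subtlety to address is the degenerate case in which $\partial K$ itself contains a straight piece along $\ell$ through $\phi(p)$: here $\ell \cap \bar{K}$ is a nontrivial segment rather than a single point. The argument nonetheless goes through, since all points of $\ell \cap \bar{K}$ still lie in $\partial K$. A subsidiary check is that $\phi$ extends isometrically to $\bar{F}$ and that $[p,q]$ is an intrinsic geodesic of $\bar{F}$; the first follows from the local flatness of $F$ near its bounding seams, and the second because straight-line segments on a flat surface in $\R^3$ coincide with intrinsic geodesics.
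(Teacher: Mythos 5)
Your proof is correct and follows the paper's own (very brief) argument: the paper dispatches this corollary in one sentence by observing that in a convex plane region no interior line segment can be tangent to the boundary, which is exactly the supporting-line contradiction you spell out. Your version just makes explicit the details (the isometry to the convex region, preservation of tangency, and the degenerate flat-boundary case) that the paper leaves implicit.
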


Finally, in a simple smooth seam form such as a pita-form or a D-form,
the requirement of smoothness sharply limits the possible
configurations.  By (local) convexity, no vertex can be incident to three or
more semicreases as part of its seam, and a vertex through which a seam
passes cannot be a strict vertex.  Consequently a pita-form must have
a single path for its seam and just two strict vertices located at the
seam's endpoints, and a D-form must have a single cycle for its seam
and no strict vertices.  Corollary \ref{c:no-creases-d-form} follows.

\section{Counterexamples}
\label{sec:counterexamples}

We have required the flat components of a D-form to be convex.  We
could relax this
requirement, requiring instead only that the metric space resulting
from joining the two components be locally convex, and
the~Alexandrov-Pogorelov theorem would still guarantee a unique convex
embedding in three-dimensional space.  Of course Corollary
\ref{c:convex-hull-seam-form} would still guarantee that the resulting
body would be the convex hull of its seam, but it turns out that
Corollary \ref{c:no-creases-d-form}, whose conditions would no longer
be satisfied, really would fail in its conclusion: one can construct a
``D-form'' under this relaxed definition which contains creases in its
flat components.  Indeed it is not hard to construct such an example, if
one keeps in mind Theorem \ref{t:no-creases-seam-form} that the
offending crease must be tangent to a seam; see Figure
\ref{f:d-form-creased}.

\begin{figure}[tbh]
  \centering
  \includegraphics[scale=.3]{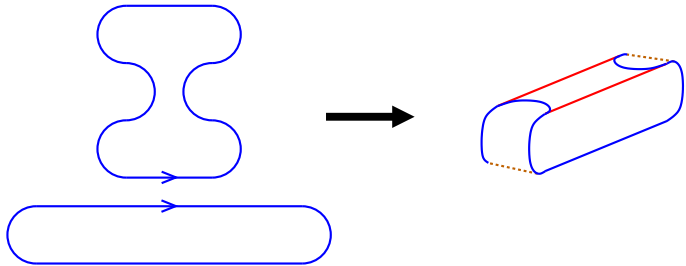}
  \caption{A ``D-form'' with a relaxed convexity condition.  The 
    solid dark line is the seam, the dotted lines are false edges from
    projection onto the page, and the solid light lines are the
    creases through a flat component.  Actual D-forms have no such
    creases.}
  \label{f:d-form-creased}
\end{figure}

\xxx{Should obviate need for color, esp. in captions, for final version.}

For pita-forms, we have concluded in Corollary
\ref{c:no-creases-d-form} that a pita-form may have at most one
crease.  Indeed this is tight, and it is easy to construct an example
pita-form with a crease; see Figure \ref{f:pita-form}.  This
possibility of creases therefore represents a real difference from
D-forms.  It represents also a contrast from the appearance of the
natural paper experiments, which led the authors first introducting
pita-forms (\cite{GFALOP}) to suggest that pita-forms might never have
creases; in fact, once one is familiar with Corollary
\ref{c:convex-hull-seam-form}, it is clear that the same experiments
really would have to be creased if~only the paper were behaving
ideally.

\begin{figure}[tbh]
  \centering
  \includegraphics[scale=.5]{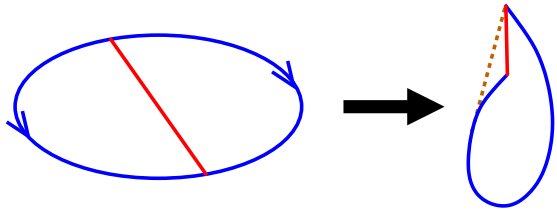}
  \caption{A typical pita-form.  A crease runs between the endpoints of the seam.}
  \label{f:pita-form}
\end{figure}

For some theorems of the same flavor as our results here, one might
hope to obtain proofs by showing that the desired properties hold of
convex polyhedra, which are relatively concrete and amenable to
reasoning, and then that they carry over to general convex bodies as
limits of polyhedra.  Indeed, this is the approach taken in
\cite{GFALOP} to argue for Corollary \ref{c:convex-hull-seam-form} for
the case of D-forms.
Unfortunately this approach does not hold as widely as one might like.
In particular, Corollary \ref{c:no-creases-doubly-convex} cannot be
proven by a limiting argument of the obvious form, even when
restricted to D-forms.  When each
flat component of the D-form is approximated by a sequence of
polygons, it is possible for the dihedral angles inside the
components to approach positive limits, even when the angles of the polygonal
approximations are required to converge to zero.  For example, in
the D-form obtained from two circular disks (which is just the double
cover of a disk), the components may be approximated by regular $n$-gons
for increasing $n$, and the resulting approximations to the D-form may
be antiprisms of two smaller $n$-gons and $2n$ triangles.  In this
approximation sequence, the dihedral angles between each $n$-gon and
its $n$ neighboring triangles approach $\pi/4$, not zero, even though
they lie inside the flat components.  For this and other reasons we
have chosen direct proofs that attack the general case of convex bodies.

\section{Acknowledgments}
We thank Jonathan Kelner, Joseph O'Rourke, and Johannes Wallner for
helpful discussions, and the anonymous referees for helpful comments.

\bibliographystyle{plain}

\begin{thebibliography}{PW01}
\bibitem[Ale50]{Alexandrov}A. D. Alexandrov,  {\sl Convex Polyhedra},
  Springer-Verlag, Berlin, 2005.  See especially note 21, page 189.
\bibitem[BI06]{BI}Alexander I. Bobenko and Ivan~Izmestiev, Alexandrov's
  theorem, weighted Delaunay triangulations, and mixed volumes,
  Annales de l'Institut Fourier, in press. {\rm arXiv:math.DG/0609447}.
\bibitem[DO07]{GFALOP}Erik D. Demaine and Joseph O'Rourke, {\sl
    Geometric Folding Algorithms}, Cambridge University Press,
  Cambridge, 2007.  Pages 352--354.
\bibitem[Gau02]{Gauss}Gauss, Carl Friedrich, General Investigations of
  Curved Surfaces, 1827, Morehead and Hiltebeitel, tr., Princeton, 1902.
\bibitem[KPD09]{KPD}Daniel Kane, Gregory N. Price, and Erik
  D. Demaine, A pseudopolynomial algorithm for Alexandrov's theorem,
  Algorithms and Data Structures Symposium (WADS) 2009.  {\rm arXiv:0812.5030}.
\bibitem[PW01]{PW}Helmut Pottmann and Johannes Wallner, {\sl
    Computational Line Geometry}, Springer-Verlag, Berlin, 2001.  Page
  418.
\bibitem[Sch93]{Schneider}Rolf Schneider, {\sl Convex Bodies: The
    Brunn-Minkowski Theory}, Cambridge University Press, Cambridge,
  1993.
\bibitem[Wil]{Wills}Tony Wills, DForms: 3D forms from two 2D sheets,
  {\sl Bridges: Mathematical Connections in Art, Music, and Science},
  Reza~Sarhangi and John~Sharp, eds, London, pp.~503--510.
\end{thebibliography}

\end{document}